\newcommand{\mR}{{\mathcal R}}
\newcommand{\mC}{\mathbb C}
\newcommand{\mZ}{\mathbb Z}
\newtheorem{lemma}{Lemma}
\newenvironment{proof}{\par\noindent{\bf Proof.}}{\hfill$\scriptstyle\blacksquare$}
\def\beq{\begin{equation}}
\def\eq{\end{equation}}
\newtheorem{prop}{Proposition}
\begin{document}

\setcounter{page}{1}

\begin{center}

\

\vspace{-0mm}

{\Large{\bf A solution of the associative Yang-Baxter equation}}

 \vspace{3mm}

{\Large{\bf related to the queer Lie superalgebra}}

 \vspace{15mm}

 {\Large {Maria Matushko}}
\qquad\quad\quad

  \vspace{5mm}

{\em Steklov Mathematical Institute of Russian
Academy of Sciences,\\ Gubkina str. 8, 119991, Moscow, Russia}

{\em Center for Advanced Studies, Skoltech, 143026, Moscow, Russia}


   \vspace{5mm}


\end{center}

\vspace{0mm}

\begin{abstract}
We propose a trigonometric solution of the associative Yang-Baxter equation related to the queer Lie superalgebra which in its turn satisfies the quantum Yang-Baxter equation.  
\end{abstract}

\newpage

\section{Introduction}\label{sec1}
\setcounter{equation}{0}
Let $A$ be an associative algebra and $R(\hbar,u)$ is a meromorphic function of two complex variables $(\hbar, u)$ 
taking values in $A\otimes A$. The {\em associative Yang-Baxter equation} (AYBE) is the equation in $A\otimes A\otimes A$:

\beq\label{AYBE1}
\begin{array}{c}
    R_{12}(x,u) R_{23}(y,v) - R_{13}(y,u+v) R_{12}(x-y,u) - R_{23}(y-x,v) R_{13}(x,u+v)=0\,.
\end{array}
\eq
The AYBE was introduced in \cite{Agu} and independently in \cite{Pol}. We remark also that the similar constant AYBE emerges in \cite{FK}. For $A={{\rm End}} (\mC^{N})$ the solutions of (\ref{AYBE1}) were classified in \cite{Pol2},\cite{Sch}, for instance the quantum elliptic Baxter-Belavin $R$-matrix satisfies (\ref{AYBE1}) in some proper normalization. In scalar case $A={{\rm End}} (\mC)$ the equation reduces to the functional equation which is known as the Fay identity. 
\beq\label{Fay}
    \phi(x,u) \phi(y,v) - \phi(y,u+v) \phi(x-y,u) - \phi(y-x,v) \phi(x,u+v)=0\,.
\eq
In the elliptic case the solution of (\ref{Fay}) is the following $$\phi(\hbar,u)=\frac{\theta'(0)\theta(u+\hbar)}{\theta(\hbar)\theta(u)},$$ where $\theta(z)=\theta_{11}(z|\tau)$ is the odd Riemann theta function; in trigonometric case it reduces to: 
 \beq\label{phi}
\phi(\hbar,u)=\pi \cot(\pi \hbar)+ \pi \cot(\pi u).
 \eq
 
 Let us write the AYBE equation in the following form:
 \beq\label{AYBE}
\begin{array}{c}
    R^{x}_{12}(u_1,u_2) R^{y}_{23}(u_2,u_3) - R^{y}_{13}(u_1,u_3) R^{x-y}_{12}(u_1,u_2) - R^{y-x}_{23}(u_2,u_3) R^{x}_{13}(u_1,u_3)=0\,.
\end{array}
\eq
In $  R^{h}_{ij}(u_i,u_j)$ the parameter  $\hbar$ plays the role of the Plank constant and $u_i,u_j$ are spectral parameters. To get (\ref{AYBE1}) from (\ref{AYBE}) we just put $R^{\hbar}_{ij}(u_i,u_j):=R_{ij}(\hbar,u_i-u_j)$ with $u=u_1-u_2$ and $v=u_2-u_3$. Thus (\ref{AYBE1}) gives the solutions $R^{\hbar}(u,v)$ of (\ref{AYBE})  depending on the difference $u-v$. 

The associative Yang-Baxter equation is closely related with the classical Yang-Baxter equation and the quantum Yang-Baxter equation. Following \cite{Pol},\cite{Sch} if some AYBE solution satisfies the condition 
 \beq\label{skewsym}
R_{12}^\hbar(u,v)=-R_{21}^{-\hbar}(v,u).
\eq
and has a Laurent expansion near $h=0$ of the form:
\beq\label{expans}
R_{12}^\hbar(u,v)=\frac{1}{\hbar}+r_{12}(u,v)+\hbar\, m_{12}(u,v)+O(\hbar^2),
\eq
then 
$r_{12}(u,v)$ is the antisymmetric ($r_{12}(u,v)=-r_{21}(v,u)$) solution of the {\em classical Yang-Baxter equation}: 
\beq\label{clYBE}
[r_{12}(u,v), r_{13}(u,w)]+[r_{12}(u,v), r_{23}(v,w)]+[r_{13}(u,w), r_{23}(v,w)]=0.
\eq
In \cite{Pol} (see also Lemma \ref{Lemma1}) it is shown that often the solutions of the AYBE are automatically the solutions of the {\em quantum Yang-Baxter equation} with the fixed $\hbar$:
 \beq\label{QYB}
\displaystyle{
    R^{\hbar}_{12}(u,v)  R^{\hbar}_{13}(u,w) R^{\hbar}_{23}(v,w) =
      R^{\hbar}_{23}(v,w) R^{\hbar}_{13}(u,w) R^{\hbar}_{12}(u,v)\,.
      }
\eq
The solution of (\ref{QYB}) is called a {\em $R$-matrix}.

Further we present the $R$-matrices satisfying the AYBE (\ref{AYBE}) and related to the graded algebras. Note that these solutions of (\ref{AYBE}) depend not only on the difference of spectral parameters.

Consider the general linear Lie superalgebra ${\rm gl}_{N|N}$ with standard generators $e_{ij}$. Let the indices $i,j$ run through $\pm 1,\dots,\pm N$.
Consider the $\mZ_2$-graded
vector space $\mC^{N|N}=\mC^{N}\oplus\mC^{N}$ with grading
\beq \label{grading}
p_i=\begin{cases}
     0 \text{     for    } 1\le i\le N, \\
     1 \text{     for   } -N\le i\le -1.
    \end{cases}
\eq
Let $e_{ij}\in{{\rm End}} (\mC^{N|N})$ be the standard matrix units, the algebra ${{\rm End}} (\mC^{N|N})$ is $\mZ_2$-graded so that ${\rm deg}\, e_{ij}=p_i+p_j$.

Let $R^\hbar(u,v)$ be the rational function of two complex variables $u,v$ valued in ${{\rm End}} (\mC^{N|N})^{\otimes 2}$:
\beq\label{ratR}
\begin{array}{c}
   \displaystyle{
R^\hbar_{12}(u,v)=\frac{1}{\hbar}+\sum_{i,j} e_{ij}\otimes e_{ji} \frac{(-1)^{p_j}}{u-v}+\sum_{i,j} e_{ij}\otimes e_{-j,-i} \frac{(-1)^{p_j}}{u+v}
 }
  \end{array}
\eq

The function (\ref{ratR}) was introduced in \cite{Nazold}. It satisfies the quantum Yang-Baxter equation (\ref{QYB}) in ${{{\rm End}} (\mC^{N|N})^{\otimes 3}}(u,v,w)$ and is called the quantum rational  $R$-matrix for the queer Lie superalgebra. In its turn the $R$-matrix (\ref{ratR}) satisfies the associative Yang-Baxter equation (\ref{AYBE}) we prove this in Section 2.2.
Note that the AYBE equation depends on the normalization while the solutions of the quantum Yang-Baxter equation (\ref{QYB}) are defined up to multiplication by a function depending on $\hbar$ and spectral parameters.

We present a trigonometric quantum $R$-matrix satisfying the AYBE and related to the queer Lie superalgebra. This solution is given by the following formula:
 \begin{equation}\label{superRqueer}
   \begin{array}{c}
   \displaystyle{
  {\rm R}^\hbar_{12}(u,v)
  =\pi\sum\limits_{i}\Big((-1)^{p_i}\cot(\pi (u-v))+\cot(\pi\hbar)\Big) e_{ii}\otimes e_{ii}+
   }
     \\ \ \\
   \displaystyle{
  \pi\sum\limits_{i}\Big((-1)^{p_i}\cot(\pi (u+v))+\cot(\pi\hbar)\Big) e_{ii}\otimes e_{-i,-i}+
   }
  \\ \ \\
     \displaystyle{
  +\pi\sum\limits_{|i|\neq |j|} e_{ii}\otimes e_{jj}\frac{ \exp\Big(\frac{\pi\imath\hbar}{N}\Big(\left(|i|-|j|\right)-N{\rm sign}\left(|i|-|j|\right)\Big)\Big)}{\sin(\pi\hbar)}+
  }
  \\ \ \\
   \displaystyle{
 +\pi\sum\limits_{i\neq j} (-1)^{p_j} e_{ij}\otimes e_{ji}\,\frac{ \exp\Big(\frac{\pi\imath (u-v)}{N}\Big((i-j)-N{\rm sign}(i-j)\Big)\Big)}{\sin(\pi (u-v))}\,+
  }
   \\ \ \\
   \displaystyle{
 +\pi\sum\limits_{i\neq j} (-1)^{p_j} e_{i,j}\otimes e_{-j,-i}\,\frac{ \exp\Big(\frac{\pi\imath (u+v)}{N}\Big((i-j)-N{\rm sign}(i-j)\Big)\Big)}{\sin(\pi (u+v))}\,.
  }
  \end{array}
  \end{equation}
It also satisfies the quantum Yang-Baxter equation. The proofs are given in Section \ref{trR}. A trigonometric solution of the quantum Yang-Baxter equation related to the queer Lie superalbegra without spectral parameters was found in \cite{Olsh}. In \cite{JNaz} there is a recipe how to insert the spectral parameters in this case. We present the explicit formula (\ref{uqsuperqueer}) of the $R$-matrix constructed in this way. However, this $R$-matrix does not satisfy the AYBE. We find the corresponding twist to connect both the solutions  (\ref{uqsuperqueer}) and (\ref{superRqueer})  of the quantum Yang-Baxter equation.  

A trigonometric solution of (\ref{AYBE}) for the graded case depending on difference $u-v$ is presented in \cite{MZ2}. Here we construct the trigonometric solution (\ref{superRqueer}), which depends not only on the difference $u-v$. This result extends the classification of the trigonometric solutions of AYBE given in \cite{Pol2},\cite{Sch}. Also, solutions of the AYBE are used for different constructions in integrable systems \cite{FK},\cite{Kiril},\cite{GrSZ},\cite{LOZ14},\cite{KrZ},\cite{SeZ},\cite{SeZ3},\cite{SeZ2}.


\section{Rational solution of the AYBE}
\setcounter{equation}{0}
\subsection{The queer Lie superalgebra}

Let $A$ and $B$ be any associative $\mZ_2$-graded algebras. Their tensor product is also a $\mZ_2$-graded algebra with the induced  $\mZ_2$-grading, where the operation for the homogeneous elements $a_1,a_2\in A, b_1,b_2\in B$ is defined by
\beq\label{conv1}
(a_1\otimes b_1)(a_2\otimes b_2)=(-1)^{(\deg a_2) (\deg b_1)}a_1a_2\otimes b_1b_2.
\eq
For any two $\mZ_2$-graded modules $V_1$ and $V_2$ over $A$ and $B$ respectively, the vector space  $V_1\otimes V_2$ is a $\mZ_2$-graded module over $A\otimes B$ such that for any homogeneous elements  $v_1\in V_1$ and $v_2\in V_2$
\beq\label{conv2}
\begin{array}{ccc}
(a\otimes b)(v_1\otimes v_2)=(-1)^{\deg b\deg v_1} a v_1\otimes b v_2\,,\\ \\
\deg(v_1\otimes v_2)=\deg(v_1)+\deg (v_2)\,.
\end{array}
\eq
Consider the $\mZ_2$-graded
vector space $\mC^{N|N}=\mC^{N}\oplus\mC^{N}$ with the standard basis $e_i, i=\pm 1,\dots,\pm N$ and grading (\ref{grading}), $\deg e_i=p_i$. Let $e_{ij}\in {{\rm End}} (\mC^{N|N})$ be the standard matrix unit, defined by $e_{ij}e_k=\delta_{jk}e_i$. The associative algebra ${{\rm End}} (\mC^{N|N})$ is $\mZ_2$-graded so that  $\deg e_{ij}=p_i+p_j$ and $\mC^{N|N}$ is a $\mZ_2$-graded module over ${{\rm End}} (\mC^{N|N})$. Using (\ref{conv1}) and (\ref{conv2}) the tensor product $\left(\mC^{N|N})\right)^{\otimes k}$ is a $\mZ_2$-graded module over ${{{\rm End}}} (\mC^{N|N})^{\otimes k}$.  
We will also regard $e_{ij}$ as generators of the complex Lie superalgebra ${\rm gl}_{N|N}$.

Define an involutive automorphism $\eta$ of ${{{\rm End}}} (\mC^{N|N})$ by mapping
\beq\label{autom}
\eta: e_{i,j}\to e_{-i,-j}. 
\eq
The {\em queer Lie superalgebra} ${\rm q}_N$ is the fixed point subalgebra in ${\rm gl}_{N|N}$ with respect to the involutive automorphism $\eta$.

Consider the element 
\beq\label{J}
J=\sum_i e_{i,-i}\, (-1)^{p_i},
\eq
of algebra ${{\rm End}} (\mC^{N|N})$, note  that $\deg (J)=1$. The following relations hold 
\beq\label{Jrel}
J^2=-{\rm Id} \qquad J_a J_b=-J_b J_a ,
\eq
where $J_a={\rm Id}^{\otimes (a-1)}\otimes J\otimes {\rm Id}^{\otimes (m-1-a)} $ for positive integers $a\le m$. 
Let ${{\rm Q}}(N)\subset {{\rm End}} (\mC^{N|N})$ be the centralizer of $J$, it is spanned by elements 
$$
e_{ab}+e_{-a,-b}, \qquad e_{a,-b}+e_{-a,b}.
$$
The subalgebra $Q(N)\subset {{\rm End}} (\mC^{N|N})$ can be viewed as the Lie superalgebra ${\rm q}_N$.


\subsection{R-matrix properties}
By $P_{12}$ we denote the superpermutation operator in  ${{\rm End}} (\mC^{N|N})^{\otimes 2}$:
\beq\label{permutation}
P_{12}=\sum_{i,j} e_{i,j}\otimes e_{ji}\, (-1)^{p_j}.
\eq
Let us rewrite the rational $R$-matrix (\ref{ratR}) in new notations (\ref{permutation}) and (\ref{J}), see also \cite{Naz} :
\beq\label{ratR2}
\begin{array}{c}
   \displaystyle{
R^\hbar_{12}(u,v)=\frac{1}{\hbar}+ \frac{P_{12}}{u-v}+({\rm Id} \otimes \eta)\frac{P_{12}}{u+v}=
 }
  \\ \ \\
   \displaystyle{
   =\frac{1}{\hbar}+\frac{P_{12}}{u-v}+ \frac{J_1 J_2 P_{12}}{u+v} }.
  \end{array}
\eq
The $R$-matrix (\ref{ratR2}) satisfies the unitarity property
\beq\label{unirat}\begin{array}{c}\displaystyle
    R^{\hbar}_{12}(u,v) R^\hbar_{21}(v,u)=\Big( \frac{1}{\hbar^2}-
  \frac{1}{(u-v)^2}-
  \frac{1}{(u+v)^2}\Big)\rm{Id}\,,
\end{array}\eq
and the skew-symmetry property (\ref{skewsym}).

 \begin{prop}
The $R$-matrix (\ref{ratR2}) satisfies the associative Yang-Baxter equation 
(\ref{AYBE}).
\end{prop}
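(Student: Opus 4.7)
The plan is direct verification by substitution. Decompose the $R$-matrix as $R^\hbar_{ij}(u_i,u_j)=\hbar^{-1}+r_{ij}+s_{ij}$ with $r_{ij}=P_{ij}/(u_i-u_j)$ and $s_{ij}=J_iJ_jP_{ij}/(u_i+u_j)$. Before expanding, I would record the algebraic ingredients needed for the cancellations: $P_{ij}^2={\rm Id}$ and the braid relation $P_{ij}P_{jk}P_{ij}=P_{ik}$; the intertwining $P_{ij}A_i=A_jP_{ij}$, which holds in the graded setting by the standard sign-cancellation check; $J_i^2=-{\rm Id}$ with $J_iJ_j=-J_jJ_i$ for distinct sites; and the consequence $P_{ij}(J_iJ_j)=-(J_iJ_j)P_{ij}$ obtained from these. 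It is also useful to note that $J_1J_2P_{12}=({\rm Id}\otimes\eta)(P_{12})$, so the $s$-term is the image of the $r$-term under the automorphism $\eta$ of (\ref{autom}) acting on the second tensor factor.

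Substituting $R=\hbar^{-1}+r+s$ into (\ref{AYBE}) and fully expanding yields $27$ operator-valued summands, each with a rational coefficient in $x,y,u_1,u_2,u_3$. I would sort these by the number of $s$-factors they contain: $0$, $1$, or $2$. Terms with different $s$-counts have distinct denominator signatures (number of $u_i+u_j$ factors versus $u_i-u_j$ factors), so they cannot cancel across sectors and each sector must vanish on its own. The zero-$s$ sector reproduces the standard AYBE computation for the Yang rational $R$-matrix $\hbar^{-1}+P_{12}/(u_1-u_2)$; it reduces to the scalar partial-fraction identity
\begin{equation*}
\frac{1}{xy}-\frac{1}{y(x-y)}-\frac{1}{x(y-x)}=0
\end{equation*}
combined with the braid relation for permutations. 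The two-$s$ sector is handled by pushing $J$-factors to the left via $P_{ij}(J_iJ_j)=-(J_iJ_j)P_{ij}$; after this the permutation skeleton matches the zero-$s$ case, and the same Fay-type identity closes it, now on the reflected denominators $u_i+u_j$.

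The genuinely new work lies in the one-$s$ sector, where each term combines one ``direct'' denominator $1/(u_i-u_j)$ with one ``reflected'' denominator $1/(u_k+u_l)$. I would organize these twelve terms by which spectral-parameter pair carries the $+$ sign and which operator string (a single $P_{ij}$ or $P_{jk}$, or a product $P_{ij}P_{jk}$) they multiply. For each group the cancellation reduces to a three-term partial-fraction identity of the form
\begin{equation*}
\frac{1}{(u_1-u_3)(u_1+u_2)}-\frac{1}{(u_2+u_3)(u_1-u_3)}+\frac{1}{(u_2+u_3)(u_1+u_2)}=0
\end{equation*}
and its analogues under natural permutation of the three indices; each is an elementary consequence of the expansion $1/((a-b)(a+c))=\bigl(1/(b+c)\bigr)\bigl(1/(a-b)-1/(a+c)\bigr)$.

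The expected main obstacle is not any deep step but the bookkeeping of signs and of which denominator pairs with which operator string, especially in the one-$s$ sector where the interplay between $P_{ij}$ and $J_k$ combined with the permutation-sensitive rearrangement of spectral parameters can easily produce errors. Once the zero-$s$ and two-$s$ sectors are treated in parallel via the Fay identity, the bulk of the effort is the careful verification of the twelve one-$s$ terms, which I would carry out by grouping them according to their operator and denominator signatures so that the corresponding three-term partial-fraction identity makes each cancellation transparent.
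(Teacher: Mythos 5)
Your overall strategy --- expand $R=\hbar^{-1}+r+s$ with $r_{ij}=P_{ij}/(u_i-u_j)$ and $s_{ij}=J_iJ_jP_{ij}/(u_i+u_j)$, reduce each operator string via $P_{12}P_{23}=P_{13}P_{12}=P_{23}P_{13}$ and the $J$-relations, and close each group of scalar coefficients with the Fay identity (\ref{Fay}) for $\phi(x,u)=1/u$ --- is the same as the paper's. But your organizing claim, that the $0$-, $1$- and $2$-$s$ sectors must cancel separately because their denominator signatures differ, is false, and the proof does not close as you have arranged it. The two-$s$ sector is $s_{12}s_{23}-s_{13}s_{12}-s_{23}s_{13}$; pushing the $J$'s to the left gives the reduced numerators $J_2J_3P_{23}P_{13}$, $-J_1J_2P_{23}P_{13}$ and $J_1J_3P_{23}P_{13}$, which are linearly independent operators, so this sector cannot vanish on its own. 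Moreover the ``reflected'' Fay identity you invoke for it fails: $\frac{1}{(u_1+u_2)(u_2+u_3)}-\frac{1}{(u_1+u_3)(u_1+u_2)}-\frac{1}{(u_2+u_3)(u_1+u_3)}=\frac{-2u_2}{(u_1+u_2)(u_2+u_3)(u_1+u_3)}\neq 0$, precisely because $(u_1+u_2)-(u_2+u_3)=u_1-u_3$ and not $u_1+u_3$. This last relation is also what invalidates the signature heuristic: a difference-type factor is itself a difference of two sum-type factors, so rational functions of mixed signature can be (and here are) linearly dependent. Indeed your own displayed identity $\frac{1}{(u_1-u_3)(u_1+u_2)}-\frac{1}{(u_2+u_3)(u_1-u_3)}+\frac{1}{(u_2+u_3)(u_1+u_2)}=0$ already betrays this: its third term carries two reflected factors and is the coefficient of the two-$s$ product $s_{12}s_{23}$, not of any one-$s$ term.

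The correct grouping, which is what the paper does in (\ref{a22}), is by reduced operator numerator rather than by $s$-count: each two-$s$ term pairs with exactly two one-$s$ terms sharing the same numerator $J_aJ_bP_{23}P_{13}$. Concretely, besides the pure-$r$ triple (\ref{a21}) (which you treat correctly), the nine remaining quadratic terms split into $s_{12}s_{23}-r_{13}s_{12}-s_{23}r_{13}=0$ (Fay with $u=u_2+u_3$, $v=u_1-u_3$, $u+v=u_1+u_2$), $r_{12}s_{23}-s_{13}r_{12}-s_{23}s_{13}=0$ (Fay with $u=u_1-u_2$, $v=u_2+u_3$), and $s_{12}r_{23}-s_{13}s_{12}-r_{23}s_{13}=0$ (Fay with $u=u_1+u_3$, $v=u_2-u_3$). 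With this regrouping your computation goes through; as written, the one-$s$ and two-$s$ ``sectors'' each leave nonvanishing remainders that only cancel against one another.
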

\begin{proof}
The statement can be proved by direct calculation. Consider the l.h.s. of (\ref{AYBE}) as a monomial of the permutation operators.
For the coefficients of the zero and first power we obtain obvious identities. Non-trivial equations arise for second power only.
For example, let us show that
\beq\label{a21}
\begin{array}{c}
   \displaystyle{
\frac{P_{12}}{u_1-u_2}\cdot \frac{P_{23}}{u_2-u_3}-\frac{P_{13}}{u_1-u_3}\cdot \frac{P_{12}}{u_1-u_2}-\frac{P_{23}}{u_2-u_3}\cdot \frac{P_{13}}{u_1-u_3}=0
}
\end{array}
\eq
and
\beq\label{a22}
\begin{array}{c}
   \displaystyle{
\frac{J_1 J_2 P_{12}}{u_1+u_2}\cdot \frac{J_2 J_3 P_{23}}{u_2+u_3}-\frac{P_{13}}{u_1-u_3}\cdot \frac{J_1 J_2 P_{12}}{u_1+u_2}-\frac{J_2 J_3 P_{23}}{u_2+u_3}\cdot \frac{P_{13}}{u_1-u_3}=0.
}
\end{array}
\eq
To prove (\ref{a21}) we use the following relations in symmetric group 
\beq\label{Prel}
P_{12} P_{23}=P_{13} P_{12}=P_{23} P_{13}
\eq 
and the Fay identity (\ref{Fay}) for $\phi(x,u):=\frac{1}{u}$ and $u=u_1-u_2, v=u_2-u_3$.
Using (\ref{Jrel}) and (\ref{Prel}) we show that the numerators in l.h.s. (\ref{a22}) are equal up to a sign:
$$
J_1 J_2 P_{12} J_2 J_3 P_{23}=P_{12}  J_2 J_1 J_2 J_3 P_{23}=P_{12} J_1 J_3 P_{23} =J_2 J_3 P_{12} P_{23}=J_2 J_3 P_{23} P_{13},
$$
$$
P_{13} J_1 J_2 P_{12}=J_3 J_2 P_{13} P_{12}=-J_2 J_3 P_{13}P_{12}=-J_2 J_3 P_{23}P_{13},
$$
then we use the Fay identity (\ref{Fay}) for $\phi(x,u):=\frac{1}{u}$ and $u=u_2+u_3, v=u_1-u_3$.
\end{proof} 
\begin{prop}
The $R$-matrix (\ref{ratR2}) satisfies the quantum Yang-Baxter equation 
(\ref{QYB}).
\end{prop}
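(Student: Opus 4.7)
The plan is to deduce (\ref{QYB}) from the associative Yang-Baxter equation just established, by invoking the general principle recalled in the Introduction (Lemma \ref{Lemma1}, due to Polishchuk \cite{Pol}). That principle asserts that an AYBE solution which also satisfies the skew-symmetry (\ref{skewsym}) and the unitarity (\ref{unirat}) automatically satisfies (\ref{QYB}) at each fixed value of $\hbar$. So the task reduces to collecting these three inputs and then citing the lemma.

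Concretely, I would proceed in three short steps. The AYBE has just been verified in the preceding Proposition. The unitarity (\ref{unirat}) is already stated above and is a direct computation from (\ref{ratR2}) using $P_{12}^2={\rm Id}$, the relations (\ref{Jrel}), and the commutation $P_{12} J_2 J_1 = J_1 J_2 P_{12}$, after which the linear-in-$\hbar^{-1}$ cross terms cancel and the second-order part produces exactly the scalar on the right-hand side of (\ref{unirat}). The skew-symmetry (\ref{skewsym}) is checked term by term in (\ref{ratR2}): the $1/\hbar$ summand is manifestly skew in $\hbar$; the piece $P_{12}/(u-v)$ maps to $P_{21}/(v-u)=-P_{12}/(u-v)$ since $P_{21}=P_{12}$; and $J_1 J_2 P_{12}/(u+v)$ maps to $J_2 J_1 P_{21}/(v+u)=-J_1 J_2 P_{12}/(u+v)$ using $J_1 J_2=-J_2 J_1$ from (\ref{Jrel}). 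With these three inputs in place, Lemma \ref{Lemma1} yields (\ref{QYB}).

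The main obstacle I anticipate concerns the applicability of Lemma \ref{Lemma1}: the standard Polishchuk reduction is typically formulated for $R$-matrices whose spectral dependence is of pure difference form, whereas (\ref{ratR2}) depends separately on $u-v$ and $u+v$. If the version of the lemma stated in the paper already covers this more general situation, the argument is complete; otherwise one has to revisit its proof and check that the reduction — essentially, multiplying two instances of the AYBE with carefully chosen arguments and using unitarity to cancel the spurious terms — carries over. The argument is formal and does not actually rely on the difference-form assumption, so this step should be routine. As a backup strategy, one can verify (\ref{QYB}) directly in the style of the preceding Proposition: expand both triple products $R_{12}^\hbar R_{13}^\hbar R_{23}^\hbar$ and $R_{23}^\hbar R_{13}^\hbar R_{12}^\hbar$ in monomials built from $P_{ij}$ and $J_a J_b P_{ij}$, match operator parts using (\ref{Prel}) and (\ref{Jrel}), and reduce the remaining scalar identities to instances of the Fay identity (\ref{Fay}) for $\phi(x,u)=1/u$, with spectral arguments drawn from $u_i\pm u_j$ and $\hbar$.
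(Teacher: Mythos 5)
Your proposal matches the paper's proof exactly: the paper also deduces (\ref{QYB}) from the AYBE together with unitarity (\ref{unirat}) and skew-symmetry (\ref{skewsym}) via Lemma \ref{Lemma1}, and it addresses the very concern you raise by restating and reproving the Polishchuk--type reduction (following \cite{LOZ14}) in a form that does not assume difference dependence on the spectral parameters. No changes needed.
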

\begin{proof}
The statement follows form the Lemma below and the properties (\ref{unirat}) and (\ref{skewsym}).
\end{proof}
\begin{lemma}\label{Lemma1}
If $R^{\hbar}(u,v)$ is unitary
\beq\label{unilem}
  R^{\hbar}_{12}(u,v) R^\hbar_{21}(v,u)=f^\hbar(u,v)\,\rm{Id}
\eq
with $f^\hbar(u,v)=f^\hbar(v,u)$ and skew-symmetric (\ref{skewsym}) solution of the AYBE (\ref{AYBE}), then it is a solution of the quantum Yang-Baxter equation (\ref{QYB}).
\end{lemma}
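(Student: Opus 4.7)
My plan is to reduce both sides of the QYBE (\ref{QYB}) to one and the same expression by applying the AYBE twice with cleverly chosen values of its Planck-constant parameters $x$ and $y$. The structure of (\ref{AYBE}) allows one to exchange any product of the form $R^\bullet_{12}R^\bullet_{23}$ with a combination of $R^\bullet_{13}R^\bullet_{12}$ and $R^\bullet_{23}R^\bullet_{13}$; crucially this never directly touches $R^\bullet_{13}R^\bullet_{23}$, so a second, ``reflected'' instance of AYBE will be needed. I would obtain it by applying the index relabeling $(1,2,3)\mapsto(3,2,1)$ (together with the corresponding spectral-parameter swap) to (\ref{AYBE}), rewriting every $R$-matrix via the skew-symmetry (\ref{skewsym}), and renaming $x\mapsto -x$, $y\mapsto -y$. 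The output is
\[
R^{x}_{23}(u_2,u_3)R^{y}_{12}(u_1,u_2) - R^{y}_{13}(u_1,u_3)R^{x-y}_{23}(u_2,u_3) - R^{y-x}_{12}(u_1,u_2)R^{x}_{13}(u_1,u_3)=0.
\]

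Next, specializing the original (\ref{AYBE}) with $(x,y)=(\hbar,2\hbar)$ gives the identity $R^\hbar_{23}R^\hbar_{13}=R^\hbar_{12}R^{2\hbar}_{23}-R^{2\hbar}_{13}R^{-\hbar}_{12}$, while the reflected AYBE with $(x,y)=(2\hbar,\hbar)$ gives $R^\hbar_{13}R^\hbar_{23}=R^{2\hbar}_{23}R^\hbar_{12}-R^{-\hbar}_{12}R^{2\hbar}_{13}$. Multiplying the first on the right by $R^\hbar_{12}$ and the second on the left by $R^\hbar_{12}$, and using the scalar identity
\[
R^\hbar_{12}(u_1,u_2)\,R^{-\hbar}_{12}(u_1,u_2)=R^{-\hbar}_{12}(u_1,u_2)\,R^\hbar_{12}(u_1,u_2)=-f^\hbar(u_1,u_2)\,\mathrm{Id},
\]
which follows by replacing $R^{-\hbar}_{12}(u_1,u_2)$ by $-R^\hbar_{21}(u_2,u_1)$ via (\ref{skewsym}) and then invoking (\ref{unilem}) together with the assumed symmetry $f^\hbar(u,v)=f^\hbar(v,u)$, both $R^\hbar_{12}R^\hbar_{13}R^\hbar_{23}$ and $R^\hbar_{23}R^\hbar_{13}R^\hbar_{12}$ collapse to the common expression $R^\hbar_{12}R^{2\hbar}_{23}R^\hbar_{12}+f^\hbar(u_1,u_2)R^{2\hbar}_{13}$, which establishes (\ref{QYB}).

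The main obstacle is the careful bookkeeping of skew-symmetry signs and spectral-parameter orderings when deriving the reflected AYBE: three applications of (\ref{skewsym}) must be composed, each producing an independent minus sign and an independent swap of arguments, and the net result must be reconciled with the final parameter renaming. Once that identity is in hand, the specializations $(\hbar,2\hbar)$ and $(2\hbar,\hbar)$ are essentially forced by the desire to retain $R^\hbar$ in the QYBE, and the remaining steps reduce to the elementary scalar identity above.
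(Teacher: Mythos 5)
Your proof is correct and follows essentially the same route as the paper's: two specializations of the AYBE at Planck parameters $\hbar$ and $2\hbar$ (the second being an index-permuted, skew-symmetrized copy of the equation), each multiplied by a single unitary factor $R^{\hbar}$ so that (\ref{unilem}) and (\ref{skewsym}) collapse both sides of the QYBE to one common expression. The only cosmetic difference is that you sandwich with $R^{\hbar}_{12}$ and specialize at $(x,y)=(\hbar,2\hbar)$ and $(2\hbar,\hbar)$, whereas the paper sandwiches with $R^{\hbar}_{23}$ and uses $(x,y)=(2\hbar,\hbar)$ throughout.
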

The proof for the $R$-matrix depending on the difference of spectral parameters can be found in \cite{LOZ14} (Section 4). Below we almost repeat this proof for our purposes.

\begin{proof}
Consider (\ref{AYBE}) for $x=2\hbar$ and $y=\hbar$:
\beq\label{a221}
  R^{2\hbar}_{12}(u_1,u_2) R^{\hbar}_{23}(u_2,u_3) - R^{\hbar}_{13}(u_1,u_3) R^{\hbar}_{12}(u_1,u_2) - R^{-\hbar}_{23}(u_2,u_3) R^{2\hbar}_{13}(u_1,u_3)=0
\eq
Multiplying (\ref{a221}) by $R_{23}^\hbar(u_2,u_3)$ from the left and using (\ref{unilem}) and (\ref{skewsym})  we obtain: 
\beq\label{a220}
 R_{23}^\hbar(u_2,u_3)R^{\hbar}_{13}(u_1,u_3) R^{\hbar}_{12}(u_1,u_2) =R_{23}^\hbar(u_2,u_3)  R^{2\hbar}_{12}(u_1,u_2) R^{\hbar}_{23}(u_2,u_3) - f^\hbar(u_2,u_3) R^{2\hbar}_{13}(u_1,u_3).
\eq
Now consider (\ref{AYBE}) changing the indices $2\leftrightarrow 3$ and the varibles $u_2\leftrightarrow u_3$ with $x=2\hbar$ and $y=\hbar$:
\beq\label{a222}
  R^{2\hbar}_{13}(u_1,u_3) R^{\hbar}_{32}(u_3,u_2) - R^{\hbar}_{12}(u_1,u_2) R^{\hbar}_{13}(u_1,u_3) - R^{-\hbar}_{32}(u_3,u_2) R^{2\hbar}_{12}(u_1,u_2)=0,
\eq
 multiply (\ref{a222}) by $R_{23}^\hbar(u_2,u_3)$ from the right and use (\ref{unilem}): 
\beq\label{a223}
R^{\hbar}_{12}(u_1,u_2) R^{\hbar}_{13}(u_1,u_3)R_{23}^\hbar(u_2,u_3) =  R^{2\hbar}_{13}(u_1,u_3)f^\hbar(u_3,u_2) - R^{-\hbar}_{32}(u_3,u_2) R^{2\hbar}_{12}(u_1,u_2)R_{23}^\hbar(u_2,u_3) .
\eq
The right hand sides of (\ref{a220}) and (\ref{a223}) are equal due to (\ref{skewsym}), which means (\ref{QYB}).
\end{proof}
\subsection{Semiclassical limit}
In a semiclassical limit $\hbar\to 0$ we have the expansion 
\beq\label{exprat}
R_{12}^\hbar(u,v)=\frac{1}{\hbar}+r_{12}(u,v),
\eq
where
\beq\label{clr}
r_{12}(u,v)=\sum_{i,j} e_{ij}\otimes e_{ji} \frac{(-1)^{p_j}}{u-v}+\sum_{i,j} e_{ij}\otimes e_{-j,-i} \frac{(-1)^{p_j}}{u+v}.
\eq
Here $r_{12}(u,v)\in{{{\rm End}} (\mC^{N|N})^{\otimes 3}}(u,v,w)$ is the classical $R$-matrix for the queer Lie superalgebra \cite{Nazold}. 
Due to (\ref{skewsym}) the solution (\ref{clr}) of (\ref{clYBE}) is antisymmetric: $r_{12}(u,v)=-r_{21}(v,u)$.  
It is known that $\displaystyle \frac{P_{12}}{u-v}$ is the solution of the classical Yang-Baxter equation, (\ref{clr}) can be rewritten 
with the help of (\ref{autom}):
\beq
r_{12}(u,v)=\frac{P_{12}}{u-v}+({\rm Id} \otimes \eta)\frac{P_{12}}{u+v}
\eq
Due to the simple expansion and (\ref{AYBE}) we have the "half" of the classical Yang-Baxter equation for this classical $r$-matrix:
\beq
r_{12}(u,v)r_{23}(v,w)-r_{13}(u,w)r_{12}(u,v)-r_{23}(v,w)r_{13}(u,w)=0.
\eq
\section{Trigonometric R-matrices}\label{trR}

\setcounter{equation}{0}
In \cite{Olsh} (Therorem 4.1)  there  was constructed a solution $S_{12}^\hbar\in {{\rm End}} (\mC^{N|N})\otimes {Q(N)}$ of the quantum Yang-Baxter equation (\ref{QYB}) related to the queer Lie superalgebra, which up to a normalization looks like:
\begin{equation}\label{SS}
 S_{12}^\hbar=\frac{2\pi\imath}{q-q^{-1}}\sum_{i\le j}e_{ij}\otimes s_{ij},
\end{equation}
where 
\beq
  \begin{array}{c}
     \displaystyle{
s_{aa}=1+(q-1)(e_{aa}+e_{-a,-a})} \qquad s_{-a,-a}=1+(q^{-1}-1)(e_{aa}+e_{-a,-a})
\\ \ \\
\displaystyle{
s_{ab}=(q-q^{-1})(e_{ba}+e_{-b,-a}),\, \, a<b,\qquad s_{-b,-a}=-(q-q^{-1})(e_{ba}+e_{-b,-a})\, \, a<b,}
\\ \ \\
\displaystyle{s_{-a,b}=-(q-q^{-1})(e_{b,-a}+e_{-b,a})} \qquad a,b=1,\dots N.
\end{array}
\eq
This solution does not depend on spectral parameter and depend only on $q=\exp{\pi \imath \hbar}$, so the quantum Yang-Baxter equation can be written as:
\beq 
S_{12}^{\hbar} S_{13}^{\hbar} S_{23}^{\hbar}=S_{23}^{\hbar}S_{13}^{\hbar}S_{12}^{\hbar}.
\eq
The solution (\ref{SS}) does not satisfy the associative Yang-Baxter equation. However, we can use the following twist transformation :
\beq\label{Stilde}
\tilde{S}_{12}^\hbar=F_{12}^\hbar S_{12}^{\hbar}\left(F_{21}^{\hbar}\right)^{-1}.
\eq
 where $F_{12}^\hbar\in {\rm Q}(N) \otimes {{\rm Q}(N)}$: 
 
\beq\label{twist}
\displaystyle{
F_{12}^\hbar=\sum_{a,b=1}^N \exp\left(\frac{\pi \imath \hbar \left((a-b)-N{\rm sign}(a-b)\right)}{2N}\right)(e_{aa}+e_{-a,-a})\otimes (e_{bb}+e_{-b,-b})\,.
}
\eq
Then $\tilde{S}_{12}^\hbar$ satisfies both the quantum Yang-Baxter equation (\ref{QYB}) and the associative Yang-Baxter equation (\ref{AYBE}), which looks like:
\beq\label{assforS}
  \tilde{S}^{x}_{12} \tilde{S}^{y}_{23}- \tilde{S}^{y}_{13} \tilde{S}^{x-y}_{12} - \tilde{S}^{y-x}_{23} \tilde{S}^{x}_{13}=0,
\eq
since $ \tilde{S}^{\hbar}_{12} $ does not depend on spectral parameters.

Following  \cite{JNaz} (Section 3) from $S_{12}^{\hbar}$ one can construct a solution $\mR^\hbar_{12}(u,v)$ of the quantum Yang-Baxter equation with spectral parameters in the following way: 
\beq
\mR^\hbar_{12}(u,v)=S_{12}^\hbar+\frac{\pi e^{-\pi \imath(u-v)}}{\sin{\pi(u-v)}} P_{12} + \frac{\pi e^{-\pi \imath(u+v)}}{\sin{\pi(u+v)}} J_1 J_2 P_{12}.
\eq
In standard generators it is given by
 \beq
   \begin{array}{c}\label{uqsuperqueer}
   \displaystyle{
  \mR^\hbar_{12}(u,v)=\frac{\pi}{\sin \pi\hbar}\mathrm{Id}+
   \pi\sum\limits_a\Big((-1)^{p_a}\cot(\pi (u-v))+\coth(\pi\hbar)-\frac{1}{\sin \pi\hbar}\Big)e_{aa}\otimes e_{aa}
 +
  }
  \\ \ \\
   \displaystyle{
 +\frac{\pi}{\sin(\pi (u-v))}\sum\limits_{a< b}
 \Big( (-1)^{p_b} e_{ab}\otimes e_{ba}\,e^{\pi\imath (u-v)}+(-1)^{p_a}e_{ba}\otimes
 e_{ab}\,e^{-\pi\imath (u-v)}\Big)\,+
  }
   \\ \ \\
    \displaystyle{
   +\pi\sum\limits_a\Big((-1)^{p_a}\cot(\pi (u+v))+\coth(\pi\hbar)-\frac{1}{\sin \pi\hbar}\Big)e_{aa}\otimes e_{-a,-a}
 +
  }
  \\ \ \\
   \displaystyle{
 +\frac{\pi}{\sin(\pi (u+v))}\sum\limits_{a< b}
 \Big( (-1)^{p_b} e_{ab}\otimes e_{-b,-a}\,e^{\pi\imath (u+v)}+(-1)^{p_a}e_{ba}\otimes
 e_{-a,-b}\,e^{-\pi\imath (u+v)}\Big)\,.
  }
  \end{array}
  \eq
It can be checked that ${\mR}^{\hbar}_{12}(u,v)$ satisfies the unitarity property
\beq\label{unitrig}\begin{array}{c}\displaystyle
    {\mR}^{\hbar}_{12}(u,v) {\mR}^\hbar_{21}(v,u)=\Big( \frac{\pi^2}{\sin^2(\pi\hbar)}-
  \frac{\pi^2}{\sin^2(\pi(u-v))}-
  \frac{\pi^2}{\sin^2(\pi(u+v))}\Big)\rm{Id}\,,
\end{array}\eq
and the skew-symmetry property (\ref{skewsym}).

Indeed, we can also start from the solution (\ref{Stilde}) and insert spectral parameters:
\beq\label{Rtilde}
\tilde{\mR}^\hbar_{12}(u,v)= \tilde{S}_{12}^\hbar+\frac{\pi e^{-\pi \imath(u-v)}}{\sin{\pi(u-v)}} P_{12} + \frac{\pi e^{-\pi \imath(u+v)}}{\sin{\pi(u+v)}} J_1 J_2 P_{12}.
\eq

Since $F_{12}(\hbar)\in {\rm Q}(N) \otimes {{\rm Q}(N)}$, it commutes with $J_1$ and $J_2$ by definition of $Q(N)$, so we have the following:
\beq
F_{12}^\hbar P_{12}\left(F_{21}^{\hbar}\right)^{-1}=F_{12}^\hbar\left(F_{12}^{\hbar}\right)^{-1}P_{12}=P_{12},
\eq
\beq
F_{12}^\hbar J_1 J_2 P_{12}\left(F_{21}^{\hbar}\right)^{-1}=J_1 J_2P_{12},
\eq
and the $R$-matrices $\mR^\hbar_{12}(u,v)$ and $\tilde{\mR}^\hbar_{12}(u,v)$ are connected by the twist (\ref{twist}):
\beq\label{twistR}
\tilde{\mR}^\hbar_{12}(u,v)=F_{12}^\hbar \mR^\hbar_{12}(u,v)\left(F_{21}^{\hbar}\right)^{-1}.
\eq
From (\ref{twistR}) follows that  $\tilde{\mR}^\hbar_{12}(u,v)$ satisfies the unitarity property (\ref{unitrig}) and the skew-symmetry property (\ref{skewsym}).
The $R$-matrix ${\rm R}^\hbar(u,v)$ from (\ref{superRqueer}) and $\tilde{\mR}^\hbar(u,v)$ from (\ref{Rtilde}) are connected by the gauge transformation:
\beq\label{gauge}
{\rm R}_{12}^\hbar(u,v)=G_1(u)G_2(v)\tilde{\mR}^\hbar_{12}(u,v)G_1^{-1}(u)G_2^{-1}(v),
\eq
where 
\beq\label{G}
\displaystyle G(u)=\sum_j \exp \left(\frac{ \pi \imath u(j-1)}{N}\right)e_{jj}.
\eq
\begin{prop}
The $R$-matrix (\ref{superRqueer}) and (\ref{Rtilde}) satisfy the associative Yang-Baxter equation 
(\ref{AYBE}).
\end{prop}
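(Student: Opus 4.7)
The plan has two main stages. First, I would use the gauge relation (\ref{gauge}) to reduce AYBE for $\mathrm{R}^\hbar(u,v)$ to AYBE for $\tilde{\mR}^\hbar(u,v)$. Second, I would establish AYBE for $\tilde{\mR}^\hbar(u,v)$ by splitting it into an $\hbar$-dependent but spectral-parameter-free piece $\tilde{S}^\hbar_{12}$, for which AYBE (\ref{assforS}) is already known, and a spectral piece, then tracking cancellations analogously to the rational case in Section~2.2.

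For the first stage, note that the gauge matrix $G(u)$ in (\ref{G}) depends only on the spectral parameter, not on $\hbar$. Using that $G_a(u_a)$ and $G_b(u_b)$ commute when acting on distinct tensor slots, and that $G_k(u_k) G_k^{-1}(u_k) = \mathrm{Id}$ independently of any $\hbar$-label, each of the three terms of the AYBE (\ref{AYBE}) for $\mathrm{R}$ is conjugated by the common factor $G_1(u_1) G_2(u_2) G_3(u_3)$, with internal $G$-factors cancelling within each product. Hence AYBE for $\tilde{\mR}$ implies AYBE for $\mathrm{R}$.

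For the second stage, decompose
\[
\tilde{\mR}^\hbar_{ij}(u_i,u_j) = \tilde{S}^\hbar_{ij} + \mathcal{T}_{ij}(u_i,u_j), \qquad \mathcal{T}_{ij}(u,v) = \frac{\pi e^{-\pi\imath(u-v)}}{\sin\pi(u-v)}\,P_{ij} + \frac{\pi e^{-\pi\imath(u+v)}}{\sin\pi(u+v)}\,J_i J_j P_{ij}.
\]
Substituting into (\ref{AYBE}) produces three groups of terms: (A) the purely $\tilde{S}$-terms, which vanish by (\ref{assforS}); (B) six mixed terms bilinear in $\tilde{S}$ and $\mathcal{T}$; and (C) the purely $\mathcal{T}$-terms. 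Group (C) is simplified via the symmetric-group relations (\ref{Prel}) and (\ref{Jrel}), analogously to the treatment of (\ref{a21})--(\ref{a22}) in Section~2.2, reducing its operator content to $P_{12} P_{23}$, $J_1 J_3 P_{12} P_{23}$ and their $J$-conjugates, and leaving scalar coefficient identities that only partially match the trigonometric Fay identity (\ref{Fay}) for $\phi(\hbar,u) = \pi\cot\pi\hbar + \pi\cot\pi u$.

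The main obstacle will be the cancellation of $B$ against the leftover from $C$. I would use the explicit form of $\tilde{S}^\hbar$ from (\ref{SS}) and (\ref{Stilde}) to commute $\tilde{S}^\hbar$ through $P_{ij}$ and $J_i J_j P_{ij}$ in the six mixed products of $B$. Heuristically, the twist $F$ of (\ref{twist}) is designed precisely so that the $\hbar$-dependence carried by $\tilde{S}^\hbar$ supplies the ``$\pi\cot\pi\hbar$''-parts of the Fay identity, complementing the ``$\pi\cot\pi u$''-parts coming from $\mathcal{T}$. Once these contributions are combined, the full scalar Fay identity (\ref{Fay}) together with the operator reductions from (\ref{Prel}) and (\ref{Jrel}) should yield $B+C=0$ and thereby AYBE for $\tilde{\mR}$.
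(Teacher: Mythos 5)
Your first stage coincides with the paper's (the paper also invokes gauge invariance of (\ref{AYBE}) under (\ref{gauge}), merely in the opposite direction: it reduces the claim to ${\rm R}^\hbar$ of (\ref{superRqueer}) rather than to $\tilde{\mR}^\hbar$). Your second stage, however, is a genuinely different decomposition. The paper splits the \emph{gauge-transformed} matrix as ${\rm R}^\hbar_{12}(u,v)=D^\hbar_{12}+Q_{12}(u,v)$ with $D$ the purely diagonal, $\hbar$-only part (\ref{D}) and $Q$ the spectral-only part (\ref{Q}); the payoff is that the cross terms become \emph{exact} intertwining identities $D^x_{12}Q_{23}(u_2,u_3)=Q_{23}(u_2,u_3)D^x_{13}$, etc., with no residue, while the pure-$D$ and pure-$Q$ combinations each produce the explicit constant tensor $\mp\pi^2\sum_{|i|=|j|=|k|}e_{ii}\otimes e_{jj}\otimes e_{kk}$, which cancel --- five clean identities in all. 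Your split $\tilde{\mR}=\tilde{S}^\hbar+\mathcal{T}(u,v)$ instead kills the pure-$\tilde S$ group outright via (\ref{assforS}) but leaves nonzero residues in both the mixed group $B$ and the pure-$\mathcal{T}$ group $C$ that must be matched against each other; this is heavier bookkeeping, and note that the first tensor factor of $\tilde S^\hbar$ does not lie in ${\rm Q}(N)$, so commuting $\tilde S$ past $J_iJ_jP_{ij}$ is not purely formal. Your route is nevertheless viable and is essentially the mechanism the paper itself uses for Proposition 5, whose proof amounts to the assertion that the AYBE combination of $\mR=S+\mathcal{T}$ equals that of $S$ alone, i.e.\ precisely your $B+C=0$. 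Two caveats: you leave $B+C=0$ at the level of a heuristic (``should yield''), and you lean on (\ref{assforS}), which the paper asserts but does not verify --- the remark that $\tilde S^\hbar$ is spectral-parameter free only explains the \emph{form} of that equation, not its validity --- so on your route that nontrivial identity for $\tilde S$ still has to be checked.
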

\begin{proof}
Since the associative Yang-Baxter equation 
(\ref{AYBE}) is invariant under the gauge transformation (\ref{gauge}) it is enough to prove the statement for ${\rm R}^\hbar(u,v)$ from (\ref{superRqueer}).
Let us rewrite (\ref{superRqueer}) as the sum
\beq
{\rm R}_{12}^\hbar(u,v)=D_{12}^\hbar +Q_{12}(u,v),
\eq
where 
 \begin{equation}\label{D}
   \begin{array}{c}
   \displaystyle{
  D_{12}^\hbar
  =\pi\sum\limits_{i}\cot(\pi\hbar) \left(e_{ii}\otimes e_{ii}+e_{ii}\otimes e_{-i,-i}\right)+
   }
  \\ \ \\
     \displaystyle{
  +\pi\sum\limits_{|i|\neq |j|} e_{ii}\otimes e_{jj}\frac{ \exp\Big(\frac{\pi\imath\hbar}{N}\Big(\left(|i|-|j|\right)-N{\rm sign}\left(|i|-|j|\right)\Big)\Big)}{\sin(\pi\hbar)}
  }
    \end{array}
  \end{equation}
 depends only on $\hbar$ and
 \begin{equation}\label{Q}
   \begin{array}{c}
   \displaystyle{
  Q_{12}(u,v)
  =\pi\sum\limits_{i}(-1)^{p_i}\Big(\cot(\pi (u-v)) e_{ii}\otimes e_{ii}+
\cot(\pi (u+v) e_{ii}\otimes e_{-i,-i}\Big)+
   }

  \\ \ \\
   \displaystyle{
 +\pi\sum\limits_{i\neq j} (-1)^{p_j} e_{ij}\otimes e_{ji}\,\frac{ \exp\Big(\frac{\pi\imath (u-v)}{N}\Big((i-j)-N{\rm sign}(i-j)\Big)\Big)}{\sin(\pi (u-v))}\,+
  }
   \\ \ \\
   \displaystyle{
 +\pi\sum\limits_{i\neq j} (-1)^{p_j} e_{i,j}\otimes e_{-j,-i}\,\frac{ \exp\Big(\frac{\pi\imath (u+v)}{N}\Big((i-j)-N{\rm sign}(i-j)\Big)\Big)}{\sin(\pi (u+v))}\,.
  }
  \end{array}
  \end{equation}
  depens on $u,v$. Then the AYBE (\ref{AYBE}) can be rewritten as follows:
  \beq\label{AYBEDQ}
\begin{array}{c}
    (D_{12}^x+Q_{12}(u_1,u_2))( D_{23}^y+Q_{23}(u_2,u_3)) -(D_{13}^y+ Q_{13}(u_1,u_3))( D_{12}^{x-y}+Q_{12}(u_1,u_2)) 
    \\ \ \\
    - (D_{23}^{y-x}+Q_{23}(u_2,u_3))(D_{13}^{x}+Q_{13}(u_1,u_3))=0\,.
\end{array}
\eq
Then the statement can be proved by direct calculation checking the following five identities:
\beq
D_{12}^xD_{23}^y-D_{13}^yD_{12}^{x-y}-D_{23}^{y-x}D_{13}^{x}=-\pi^2\sum_{|i|=|j|=|k|} e_{ii}\otimes e_{jj}\otimes e_{kk}\,,
\eq
\beq
\begin{array}{c}
Q_{12}(u_1,u_2)Q_{23}(u_2,u_3)-Q_{13}(u_1,u_3)Q_{12}(u_1,u_2)-Q_{23}(u_2,u_3)Q_{13}(u_1,u_3)=
\\ \ \\
\displaystyle{
=\pi^2\sum_{|i|=|j|=|k|} e_{ii}\otimes e_{jj}\otimes e_{kk}}\,,
\end{array}
\eq
\beq
D_{12}^x Q_{23}(u_2,u_3)=Q_{23}(u_2,u_3)D_{13}^x \,,
\eq
\beq
D_{13}^y Q_{12}(u_2,u_3)= Q_{12}(u_2,u_3) D_{23}^y\,,
\eq
\beq
D_{23}^{y-x}Q_{13}(u_1,u_3)=-Q_{13}(u_1,u_3)D_{12}^{x-y}.
\eq
\end{proof}
\begin{prop}
The $R$-matrix (\ref{superRqueer}) satisfies the quantum Yang-Baxter equation 
(\ref{QYB}).
\end{prop}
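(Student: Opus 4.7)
The plan is to deduce the quantum Yang--Baxter equation for ${\rm R}^\hbar_{12}(u,v)$ from Lemma \ref{Lemma1}, exactly as was done for the rational $R$-matrix in the previous section. To apply that lemma, three ingredients are needed: the associative Yang--Baxter equation (\ref{AYBE}), a unitarity identity of the form (\ref{unilem}) with a symmetric scalar factor $f^\hbar(u,v)=f^\hbar(v,u)$, and the skew-symmetry (\ref{skewsym}).

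First, the AYBE for ${\rm R}^\hbar_{12}(u,v)$ has just been established in the previous proposition. It therefore remains to verify unitarity and skew-symmetry. For these two properties I would not compute anything from the explicit formula (\ref{superRqueer}) directly. Instead, I would transport them along the chain of transformations already set up in the text: the $R$-matrix $\mR^\hbar_{12}(u,v)$ of (\ref{uqsuperqueer}) enjoys (\ref{unitrig}) and (\ref{skewsym}); the twist (\ref{twistR}) by $F_{12}^\hbar\in {\rm Q}(N)\otimes{\rm Q}(N)$ yields $\tilde{\mR}^\hbar_{12}(u,v)$, and then the gauge transformation (\ref{gauge}) by $G_1(u)G_2(v)$ produces ${\rm R}^\hbar_{12}(u,v)$.

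The routine step is to check that both the unitarity identity and the skew-symmetry relation are preserved under (i) twisting by $F_{12}^\hbar$ and (ii) conjugation by $G_1(u)G_2(v)$. For unitarity one uses that $F_{12}^\hbar(F_{21}^\hbar)^{-1}$ cancels against its counterpart from the flipped factor, and that $G_1(u)G_2(v)$ commutes with the scalar $f^\hbar(u,v)\,{\rm Id}$ on the right-hand side. For skew-symmetry one uses that $G_1(u)$ and $G_2(v)$ act on different tensor factors and therefore commute, so conjugation by $G_1(u)G_2(v)$ intertwines the maps $(1,2,u,v)\leftrightarrow (2,1,v,u)$ in a compatible way; the twist behaves analogously because $F_{21}^\hbar$ enters on the right in (\ref{Stilde}) and (\ref{twistR}). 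In particular the scalar factor in unitarity is the same as in (\ref{unitrig}) and is manifestly symmetric in $u,v$.

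Having verified that ${\rm R}^\hbar_{12}(u,v)$ satisfies the AYBE together with the hypotheses of Lemma \ref{Lemma1}, the quantum Yang--Baxter equation (\ref{QYB}) follows immediately. The only potential obstacle is bookkeeping: one must make sure that in the definitions of both the twist and the gauge the correct factor appears on each side, so that it indeed cancels rather than producing a nontrivial extra term; once that is checked carefully, the argument goes through without further computation.
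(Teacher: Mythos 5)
Your proposal is correct and follows essentially the same route as the paper: the paper's proof is a one-line appeal to Lemma \ref{Lemma1} together with the unitarity (\ref{unitrig}) and skew-symmetry (\ref{skewsym}) properties, which the text has already transported to ${\rm R}^\hbar_{12}(u,v)$ via the twist (\ref{twistR}) and the gauge transformation (\ref{gauge}). Your additional verification that these two properties survive the twist and the gauge conjugation is exactly the bookkeeping the paper leaves implicit.
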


\begin{proof}
The statement follows from the Lemma \ref{Lemma1} and the properties (\ref{unitrig}) and (\ref{skewsym}).
\end{proof}
\\

The $R$-matrix (\ref{uqsuperqueer}) does not satisfy the AYBE (\ref{AYBE}), but it satisfies the associative Yang-Baxter equation with the additional term which does not depend on spectral parameters $u_1, u_2, u_3$:
\begin{prop}
The $R$-matrix ${\mR}^{\hbar}(u,v)$  (\ref{uqsuperqueer}) satisfies the following equation:
    \beq\label{relAY2}
\begin{array}{c}
\displaystyle{
    \mR^{x}_{12}(u_1-u_2) \mR^{y}_{23}(u_2-u_3) - \mR^{y}_{13}(u_1-u_3) \mR^{x-y}_{12}(u_1-u_2) - \mR^{y-x}_{23}(u_2-u_3) \mR^{x}_{13}(u_1-u_3)=
    }
    \\ \ \\
    \displaystyle{
    =\frac{\pi^2}{2\cos(\frac{\pi x}{2})\cos(\frac{\pi y}{2})\cos(\frac{\pi (x-y)}{2})}\sum_{|i|\neq |j|\neq |k|\neq |i|} e_{ii}\otimes e_{jj}\otimes e_{kk}\,.
    }
\end{array}\eq
\end{prop}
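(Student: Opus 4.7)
The plan is to verify (\ref{relAY2}) by expanding both sides in the standard matrix-unit basis of ${{\rm End}}(\mC^{N|N})^{\otimes 3}$ and comparing coefficients. I decompose $\mR^\hbar(u,v)$ as in (\ref{uqsuperqueer}) into three pieces: the universal identity $\frac{\pi}{\sin\pi\hbar}{\rm Id}$; a $(u,v)$-dependent diagonal piece supported on $e_{aa}\otimes e_{\pm a,\pm a}$; and an off-diagonal matrix-unit piece of the form $e_{ab}\otimes e_{ba}$ and $e_{ab}\otimes e_{-b,-a}$ with $a\neq b$. Each term of the AYBE combination is a product of two such pieces on overlapping site-pairs, so the question of which combinations contribute to a given output basis tensor is a finite matrix-unit bookkeeping.

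The key observation is that on a triple-diagonal $e_{ii}\otimes e_{jj}\otimes e_{kk}$ with $|i|\neq|j|\neq|k|\neq|i|$, only the universal identity piece of each $\mR$-factor can contribute: the off-diagonal matrix units cannot give a single-site output $e_{ii}$ (they have $a\neq b$), and the $(u,v)$-dependent diagonal pieces require $|a|=|b|$ within one factor, forcing one of the index coincidences excluded by triple distinctness. Hence the coefficient of such a triple-diagonal in the left-hand side of (\ref{relAY2}) collapses to the scalar
\beq
\frac{\pi^2}{\sin\pi x\,\sin\pi y}-\frac{\pi^2}{\sin\pi y\,\sin\pi(x-y)}-\frac{\pi^2}{\sin\pi(y-x)\,\sin\pi x},
\eq
which, using the identities $\sin\pi y=2\sin(\pi y/2)\cos(\pi y/2)$, $\sin\pi(x-y)-\sin\pi x=-2\cos(\pi x-\pi y/2)\sin(\pi y/2)$ and $\cos(\pi y/2)-\cos(\pi x-\pi y/2)=2\sin(\pi x/2)\sin(\pi(x-y)/2)$, simplifies to exactly $\frac{\pi^2}{2\cos(\pi x/2)\cos(\pi y/2)\cos(\pi(x-y)/2)}$, matching the right-hand side.

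It remains to verify that the AYBE combination for $\mR^\hbar$ vanishes on every other basis component. For triple-diagonal coincidence outputs (such as $j=\pm i$ with $|i|\neq|k|$), only diagonal pieces of $\mR$ can contribute by the same matrix-unit argument, and vanishing follows from the trigonometric Fay identity $\cot\pi x-\cot\pi(x-y)=-\sin\pi y/(\sin\pi x\,\sin\pi(x-y))$ applied to the $(u,v)$-dependent cotangent terms: the $u_1\pm u_2$ arguments cancel between the first two AYBE terms, and the purely $\hbar$-dependent cotangents cancel among all three. For outputs involving off-diagonal matrix units, the cleanest route is to use the vanishing AYBE for $\tilde{\mR}^\hbar$ (obtained from the AYBE for ${\rm R}^\hbar$ via the gauge (\ref{gauge}), which preserves the AYBE termwise since $G_i(u_i)$ is $\hbar$-independent and diagonal) together with the twist (\ref{twistR}): substituting $\mR^\hbar_{12}=(F^\hbar_{12})^{-1}\tilde{\mR}^\hbar_{12}F^\hbar_{21}$ and using that $F^\hbar$ is diagonal (so it modifies only the phases of off-diagonal matrix units) shows that the AYBE residue for $\mR$ is supported on triple-diagonals only. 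The main obstacle will be the bookkeeping of these phase-and-sign cancellations on off-diagonal components, which requires careful use of the Fay identity (\ref{Fay}) together with the $P_{12},J,\eta$-structural identities from the proof of the first proposition.
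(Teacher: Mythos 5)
Your computation of the coefficient of $e_{ii}\otimes e_{jj}\otimes e_{kk}$ with $|i|\neq|j|\neq|k|\neq|i|$ is correct, and it captures the actual content of the paper's argument: the paper writes $\mR^\hbar_{12}(u,v)=S^\hbar_{12}+\frac{\pi e^{-\pi\imath(u-v)}}{\sin\pi(u-v)}P_{12}+\frac{\pi e^{-\pi\imath(u+v)}}{\sin\pi(u+v)}J_1J_2P_{12}$ and reduces the whole proposition to the constant identity $S^{x}_{12}S^{y}_{23}-S^{y}_{13}S^{x-y}_{12}-S^{y-x}_{23}S^{x}_{13}=\mbox{r.h.s.}$, whose triple-distinct diagonal part is exactly the scalar identity you verify; your trigonometric simplification to $\frac{\pi^2}{2\cos(\pi x/2)\cos(\pi y/2)\cos(\pi(x-y)/2)}$ is right, as is the observation that only the $\frac{\pi}{\sin\pi\hbar}{\rm Id}$ pieces can hit such a component.

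The genuine gap is in your handling of the off-diagonal components via the twist. The substitution $\mR^\hbar_{12}=(F^\hbar_{12})^{-1}\tilde{\mR}^\hbar_{12}F^\hbar_{21}$ does not transport the vanishing of the AYBE combination from $\tilde{\mR}$ to $\mR$. Unlike the gauge $G_i(u_i)$, whose arguments are the spectral parameters and hence identical across all three AYBE terms (so each term is conjugated by the same $G_1G_2G_3$), the twist depends on $\hbar$ and the three terms carry the distinct arguments $x$, $y$, $x-y$, $y-x$, so the $F$-factors do not assemble into a common conjugation. Worse, in a product such as $(F^{x}_{12})^{-1}\tilde{\mR}^{x}_{12}F^{x}_{21}(F^{y}_{23})^{-1}\tilde{\mR}^{y}_{23}F^{y}_{32}$ the inner $F$-factors sit between the two $\tilde{\mR}$'s and weight each summand over the intermediate slot-2 index by a different phase, so a given output component of $\mR^{x}_{12}\mR^{y}_{23}$ is not a single phase times the corresponding component of $\tilde{\mR}^{x}_{12}\tilde{\mR}^{y}_{23}$; the claimed ``phase modification'' argument breaks down. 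Indeed, the twist is introduced precisely because it changes the AYBE defect ($\tilde S$ satisfies (\ref{assforS}) while $S$ does not), so diagonality of $F$ alone cannot imply that the residue for $\mR$ stays supported on diagonals. The repair is the paper's route: in the decomposition above, the pure $P$- and $J_1J_2P$-terms cancel by the Fay identity together with (\ref{Jrel}) and (\ref{Prel}) exactly as in Proposition 1, and the cross terms cancel using $P_{23}S^{x}_{13}=S^{x}_{12}P_{23}$ (and its analogues) together with the skew-symmetry $S^{-\hbar}_{21}=-S^{\hbar}_{12}$; none of these steps sees the twist, so the entire AYBE residue for $\mR$ equals the pure-$S$ combination, which is the constant diagonal element on the right-hand side. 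Your treatment of the coincident-diagonal components via the cotangent Fay identity is fine, but should be folded into this scheme rather than left as a separate case.
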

\begin{proof}
The proof is similar to the case of $\tilde{\mR}^{\hbar}_{12}(u,v)$. It is enough to show that
$$
  {S}^{x}_{12} {S}^{y}_{23}- {S}^{y}_{13} {S}^{x-y}_{12} - {S}^{y-x}_{23} {S}^{x}_{13}=\frac{\pi^2}{2\cos(\frac{\pi x}{2})\cos(\frac{\pi y}{2})\cos(\frac{\pi (x-y)}{2})}\sum_{|i|\neq |j|\neq |k|\neq |i|} e_{ii}\otimes e_{jj}\otimes e_{kk}.
  $$
\end{proof}\\

Note that the right hand side of (\ref{relAY2}) is the diagonal element which belongs to ${\rm Q}(N) \otimes {{\rm Q}(N)}\otimes{{\rm Q}(N)} $.
\subsection{Classical limit}
 In a semiclassical limit $\hbar\to 0$ we have the expansion 
\beq\label{exptrig}
{\rm R}_{12}^\hbar(u,v)=\frac{\pi }{\hbar}+{\rm r}_{12}(u,v)+\hbar\, {\rm m}_{12}+O(\hbar^2),,
\eq
where
\beq\label{clrtrig}
{\rm r}_{12}(u,v)=Q_{12}(u,v)+\pi\sum\limits_{|i|\neq |j|}  \frac{ \pi \imath}{N}\Big(\left(|i|-|j|\right)-N{\rm sign}\left(|i|-|j|\right)\Big) e_{ii}\otimes e_{jj}
\eq 
\beq\label{m12}
 \begin{array}{c}
   \displaystyle{
  {\rm m}_{12}
  =-\frac{\pi^2}{3}\sum\limits_{i} \left(e_{ii}\otimes e_{ii}+e_{ii}\otimes e_{-i,-i}\right)+
   }
  \\ \ \\
     \displaystyle{
 +\frac{\pi^2}{6}\sum\limits_{|i|\neq |j|} e_{ii}\otimes e_{jj} -\frac{\pi^2}{2 N^2}\sum\limits_{|i|\neq |j|} \Big(\left(|i|-|j|\right)-N{\rm sign}\left(|i|-|j|\right)\Big)^2 e_{ii}\otimes e_{jj}.
  }
    \end{array}
    \eq
  The classical $r$-matrix is antisymmetric 
$
{\rm r}_{12}(u,v)=-{\rm r}_{21}(v,u).
$
From (\ref{AYBE}) we have :
\beq\label{hclYB}
{\rm r}_{12}(u,v){\rm r}_{23}(v,w)-{\rm r}_{13}(u,w){\rm r}_{12}(u,v)-{\rm r}_{23}(v,w){\rm r}_{13}(u,w)=-{\rm m}_{12}-{\rm m}_{23}-{\rm m}_{13}.
\eq
Changing the indices $2\leftrightarrow 3$ in (\ref{hclYB}) we have :
\beq\label{hclYB2}
{\rm r}_{13}(u,w){\rm r}_{32}(w,v)-{\rm r}_{12}(u,v){\rm r}_{13}(u,w)-{\rm r}_{32}(w,v){\rm r}_{12}(u,v)=-{\rm m}_{13}-{\rm m}_{32}-{\rm m}_{12}.
\eq
Substracting (\ref{hclYB2}) from (\ref{hclYB}) we obtain the classical Yang-Baxter equation (\ref{clYBE})

\subsection*{Acknowledgments}

I am very greatful to Maxim Nazarov for helpful discussions and valuable remarks. I am also grateful to Andrei Zotov and Andrii Liashyk for their kind interest in this work.\\
This work was supported by the Russian Science Foundation under grant no. 23-11-00150.

\newpage 
\begin{small}

\end{small}

\end{document}